\newclass{\coQAM}{coQAM}
\newclass{\coQIP}{coQIP}
\newclass{\IPBQP}{\IP_\BQP}
\newclass{\UQCMA}{UQCMA}
\newtheorem{theorem}{Theorem}
\newtheorem{lemma}[theorem]{Lemma}
\theoremstyle{definition}
\newcommand{\tinyspace}{\mspace{1mu}}
\newcommand{\microspace}{\mspace{0.5mu}}
\newcommand{\op}[1]{\operatorname{#1}}
\renewcommand{\int}{\operatorname{int}}
\newcommand{\fid}{\operatorname{F}}
\newcommand\I{\mathbb{1}}
\renewcommand{\natural}{\mathbb{N}}
\newcommand{\abs}[1]{\lvert #1 \rvert}
\newcommand{\bigabs}[1]{\bigl\lvert #1 \bigr\rvert}
\newcommand{\biggabs}[1]{\biggl\lvert #1 \biggr\rvert}
\newcommand{\ip}[2]{\langle #1 \vert #2\rangle}
\newcommand{\bigip}[2]{\bigl\langle #1 \big\vert #2 \bigr\rangle}
\newcommand{\norm}[1]{\lVert\tinyspace #1 \tinyspace\rVert}
\newcommand{\bignorm}[1]{\bigl\lVert\tinyspace #1 \tinyspace\bigr\rVert}
\newcommand{\Bignorm}[1]{\Bigl\lVert\tinyspace #1 \tinyspace\Bigr\rVert}
\newcommand{\ket}[1]{\lvert\microspace #1 \microspace \rangle}
\newcommand{\bigket}[1]{\bigl\lvert\microspace #1 \microspace \bigr\rangle}
\newcommand{\bra}[1]{\langle\microspace #1 \microspace \rvert}
\newcommand{\bigbra}[1]{\bigl\langle\microspace #1 \microspace \bigr\rvert}
\newcommand{\ketbra}[2]{\ket{#1}\!\bra{#2}}
\renewcommand{\A}{\mathcal{A}}
\newcommand{\B}{\mathcal{B}}
\newcommand{\Q}{\mathcal{Q}}
\renewcommand{\R}{\mathcal{R}}
\newcommand{\tnorm}[1]{\norm{#1}_{1}}
\newcommand{\bigtnorm}[1]{\bigl\lVert\tinyspace #1 \tinyspace\bigr\rVert_{1}}
\newcommand{\Bigtnorm}[1]{\Bigl\lVert\tinyspace #1 \tinyspace\Bigr\rVert_{1}}
\newcommand{\floor}[1]{\lfloor #1 \rfloor}
\newcommand{\prob}[1]{\textup{#1}}
\newcommand{\QSD}{\prob{QSD}}
\renewcommand{\footnotemark}{}
\newcommand{\etal}{{\emph{et al.}}}
\newcommand{\arxiv}[1]{arXiv:\href{https://arxiv.com/abs/#1}{#1}}
\newcommand{\myref}[2]{\hyperref[#2]{#1~\ref{#2}}}
\begin{document}

\title{\textbf{\Large Oracle Separations for\\ Quantum Statistical
    Zero-Knowledge}}

\author[1]{Sanketh Menda}
\author[1,2]{John Watrous}

\affil[1]{%
  Institute for Quantum Computing and School of Computer
  Science\protect\\
  University of Waterloo, Canada\vspace{2mm}}

\affil[2]{%
  Canadian Institute for Advanced Research\protect\\
  Toronto, Canada}
  
\date{January 26, 2018}

\renewcommand\Affilfont{\normalsize\itshape}
\renewcommand\Authfont{\large}
\setlength{\affilsep}{6mm}
\renewcommand\Authand{\rule{6mm}{0mm}and\rule{6mm}{0mm}}

\maketitle
\begin{abstract}
  This paper investigates the power of quantum statistical zero knowledge
  interactive proof systems in the relativized setting.
  We prove the existence of an oracle relative to which quantum statistical
  zero knowledge does not contain $\UP \cap \coUP$, and we prove that quantum
  statistical zero knowledge does not contain $\UP$ relative to a random
  oracle with probability~1.
  Our proofs of these statements rely on a bound on output state discrimination
  for relativized quantum circuits based on the quantum adversary
  method of Ambainis~\cite{Ambai2002}, following a technique similar to one
  used by Ben-David and Kothari \cite{Ben-daK2017} to prove limitations on
  a query complexity variant of quantum statistical zero-knowledge.
\end{abstract}

\section{Introduction}

Interactive proof systems, first introduced by Goldwasser, Micali, and
Rackoff~\cite{GoldwMR1985,GoldwMR1989} and
Babai~\cite{Babai1985,BabaiM1988}, form a cornerstone of complexity theory.
Many variants of interactive proof systems have been studied, including quantum
statistical zero-knowledge interactive proof systems
\cite{Watro2002,Kobay2003,Watro2009,HaydeMW2013,GutosHMW2015,Chen2016},
which are the topic of this paper.

An interactive proof system has the property of being
\emph{statistical zero-knowledge} if the prover does not ``leak'' statistically
significant knowledge to a computationally bounded verifier on positive problem
inputs.
It is known that the class $\QSZK$ of decision problems having quantum
statistical zero-knowledge interactive proof systems is closed under
complementation and is contained in $\QIP(2)$, the class of decision problems
having (not necessarily zero-knowledge) quantum interactive proof systems in
which precisely two messages are exchanged between the prover and
verifier.
Unlike its classical counterpart $\SZK$, however, it is not known if the
containment of $\NP$ in $\QSZK$ has unexpected complexity theoretic
consequences.
(The containment of $\NP$ in $\SZK$ implies that the polynomial-time hierarchy
collapses to $\AM$ \cite{Fortn1989,AiellH1987,BoppaHZ1987}.)

In this paper we consider $\QSZK$ in a relativized setting, with the aim of
proving limitations on the power of this class.
We prove two results along these lines.
First, we prove that there exists an oracle relative to which $\UP \cap \coUP$
is not contained in $\QSZK$, where $\UP$ is a restricted variant of $\NP$
containing decision problems recognized by a polynomial-time nondeterministic
Turing machine with no more than one accepting computation path on every
valid input.
Second, we prove that with respect to a random oracle, the class $\UP$ is not
contained in $\QSZK$ with probability~1.

Our proofs make use of the positive weights quantum adversary method of
Ambainis~\cite{Ambai2002}.
The positive weights quantum adversary method is known to not always give tight
bounds on quantum query complexity, see
\cite{AaronS2004, Zhang2005, SpaleS2006}, but it suffices for our needs. 
Ben-David and Kothari~\cite{Ben-daK2017} recently observed that the positive
weights quantum adversary method can be used to prove limitations on a query
complexity variant of quantum statistical zero-knowledge.
Also, a related notion of state conversion in query complexity was investigated
by Lee \etal{}~\cite{LeeMRSS2011}. 

\section{Preliminaries}\label{sec:preliminaries}

In this section, we summarize relevant concepts regarding complexity theory and
quantum computation, with which we assume the reader is generally familiar.

\subsubsection*{Measures of distance between quantum states}

We define the \emph{trace norm} $\norm{A}_1$ of an operator $A$ as the sum of
its singular values (with no pre-factor of 1/2), and we define the
\emph{fidelity} between quantum states $\rho$ and $\sigma$ as
\begin{equation}
  \fid(\rho,\sigma) = \bignorm{\sqrt{\rho}\sqrt{\sigma}}_1
\end{equation}
(with the right-hand side not being squared).
Uhlmann's theorem \cite{Uhlma1976} implies that the fidelity between two
states $\rho$ and $\sigma$ is given by
\begin{equation}
  \label{eq:Fidelity}
  \fid(\rho,\sigma) =
  \max_{\ket{\psi},\ket{\varphi}} \abs{\langle \psi\!\mid\!\phi \rangle},
\end{equation}
where the maximization is over all purifications
$\ket{\psi}$ and $\ket{\phi}$ of $\rho$ and $\sigma$, respectively.
For quantum states $\rho$ and $\sigma$, their trace distance and fidelity are
related by the Fuchs\--van de Graaf inequalities \cite{FuchsG1999}
as follows:
\begin{equation}
  \label{eq:FvdG} 2 - 2\fid(\rho,\sigma) \leq \tnorm{\rho-\sigma} \leq
  2\sqrt{1-\fid(\rho,\sigma)^2}.
\end{equation}

\subsubsection*{Quantum Circuits}

The results we prove in this paper are not sensitive to the specific gate set
one chooses to adopt when discussing quantum circuits.
Nevertheless, for the sake of simplicity and concreteness, we may assume that
quantum circuits in this paper are composed of \emph{Hadamard},
\emph{Toffoli}, and \emph{phase shift} gates, as well as \emph{query gates}
(discussed below).

We may also assume that the introduction of new, initialized qubits into a
quantum circuit are represented by \emph{auxiliary qubit} gates, which have no
inputs and output one qubit in the state $\ket{0}$;
and circuits may also include \emph{erasure} gates, which take one qubit as
input and have no outputs, effectively tracing out their input qubit.
Of course these gates can be removed from a circuit, provided that a suitable
number of qubits in the state $\ket{0}$ are provided as part of the input into
the circuit and that the qubits that would have gone into erasure gates are
traced-out once the computation is finished---and this is what we mean when we
refer to a \emph{unitary purification} of a given circuit.
It is, however, convenient to view auxiliary qubit gates and erasure gates as
being gates, so that we may speak of circuits that have no inputs and output
some number of qubits in a possibly mixed state.
We refer the reader to \cite{AharoKN1998} and \cite{Watro2011} for further
details on quantum circuits acting on mixed states.

\subsubsection*{Oracles and relativization}

Throughout the paper we denote the binary alphabet by $\Sigma = \{0,1\}$.
An \emph{oracle} is any subset $A\subseteq\Sigma^{\ast}$ of binary strings, to
which membership queries are made available at unit cost.
We will use the term \emph{black box} to refer to the restriction of an
oracle to strings of a single, fixed length.

With respect to a given black box $B\subseteq\Sigma^n$, the corresponding query
gate $K_B$ is the $(n+1)$-qubit unitary gate defined by the following action on
the standard basis:
\begin{equation}
  K_B\ket{x}\ket{a} =
  \begin{cases}
    \ket{x}\ket{\neg a} & \text{if}\;x\in B\\ \ket{x}\ket{a}
    & \text{if}\;x\not\in B,
  \end{cases}
\end{equation}
for all $x\in\{0,1\}^n$ and $a\in\{0,1\}$.
Equivalently, one may write
\begin{equation}
  \label{eq:oracle-formula}
  K_B = \sum_{x\in B} \ketbra{x}{x} \otimes
  X + \sum_{x\in \overline{B}} \ketbra{x}{x} \otimes \I,
\end{equation}
where (in this case) $\I$ denotes the identity operator on a single qubit and
$X$ denotes a single-qubit \texttt{NOT} operation.

A \emph{relativized circuit} is one that may include query gates
(accessing one black box for each string length), and one views that such a
circuit queries a given oracle if the query gates are consistent with the
oracle.

\subsubsection*{Honest-verifier quantum statistical zero-knowledge}

With respect to quantum statistical zero-knowledge, we will focus on
the \emph{honest-verifier} definition of this class, which is simpler
to state than the more cryptographically satisfying
\emph{general-verifier} definition.  Although the two definitions are
known to be equivalent \cite{Watro2009}, it is only the easier of the
two containments needed to prove this equivalence that is relevant to
our results.  That is, because we prove that certain relativized
languages are not contained in $\QSZK$, no generality is lost in
making use of the honest-verifier definition, as it gives rise to a
complexity class that is at least as large as the one given by the
general-verifier definition.

With respect to an oracle $A$, a language $L$ is in $\QSZK^A$ if there
exists a quantum interactive proof system $(V,P)$ satisfying the following
(somewhat informally stated) properties:
\begin{enumerate}
\item[1.]
  The verifier is \emph{efficient}:
  $V$ is specified by a polynomial-time generated family of tuples of quantum
  circuits that represent the verifier's actions.
  These circuits may make queries to the oracle $A$.
\item[2.]
  The proof system is \emph{complete and sound}:
  on inputs in $L$, the prover $P$ (which may also query the oracle $A$)
  causes $V$ to accept with high probability, and on inputs not in $L$,
  no prover causes $V$ to accept, except with small probability.
\item[3.]
  The proof system is \emph{honest-verifier quantum statistical
    zero-knowledge}: on inputs in $L$, and assuming that one considers a
  unitary purification of $V$, the \emph{view} of $V$ (represented
  by the tensor product of the states it holds after each message exchange
  takes place) has negligible trace distance to a state that can be produced by
  a polynomial-time uniform family of quantum circuits that do not interact
  with a prover (but that may make queries to $A$).
\end{enumerate}

For the purposes of this paper, it is not necessary for us to make use of the
specific details of the definition just suggested---we instead rely on the
existence of a complete promise problem for $\QSZK$, known as
\textsc{Quantum State Distinguishability} \cite{Watro2002}.
A relativized version of this problem can be phrased as follows.\vspace{2mm}

\begin{table}[H]
  \begin{center}
    \textsc{Relativized Quantum State Distinguishability} ($\QSD^A$)
  \end{center}
  \noindent
  \begin{tabular}[H]{l@{\hspace{5mm}}p{13.75cm}}
    Input: & Relativized quantum circuits $Q_0$ and $Q_1$ that take no input
    qubits and produce output states on the same number of qubits.
    Let $\rho_0(A)$ and $\rho_1(A)$ denote the states produced by $Q_0$ and
    $Q_1$, respectively, when the query gates of these circuits operate in
    accordance with the oracle $A$.\\[1mm]
    Yes: & $(Q_0,Q_1)$ is a \emph{yes-instance} of $\QSD^A$, denoted
    $(Q_0,Q_1) \in \QSD^A_{\textup{yes}}$, if $\rho_0(A)$ and $\rho_1(A)$ are
    \emph{far}:
    \[
    \frac{1}{2} \tnorm{\rho_0(A) - \rho_1(A)} \geq \frac{2}{3}.
    \]\\
    No: & $(Q_0,Q_1)$ is a \emph{no-instance} of $\QSD^A$, denoted
    $(Q_0,Q_1) \in \QSD^A_{\textup{no}}$, if $\rho_0(A)$ and $\rho_1(A)$ are
    \emph{close}:
    \[
    \frac{1}{2} \tnorm{\rho_0(A) - \rho_1(A)} \leq \frac{1}{3}.
    \]
  \end{tabular}
\end{table}
\vspace{-3mm}

Although the proof that $\QSD$ is complete for $\QSZK$ found in
\cite{Watro2002} does not mention query gates, the proof does extend
directly to the relativized setting; query gates can simply be treated in the
same way as other gates within the context of this proof.
The following theorem expresses this fact in a form that is convenient for the
purposes of this paper.
(In this theorem, $\Gamma$ denotes an arbitrary alphabet over which languages
are to be considered---but we will only need to concern ourselves with
the unary alphabet $\Gamma = \{0\}$ in this paper.)

\begin{theorem}\label{thm:QSD-QSZK-complete}
  Let $L\subseteq\Gamma^{\ast}$ be a language and let $A\subseteq\Sigma^{\ast}$
  be an oracle.
  The language $L$ is contained in $\QSZK^A$ if and only if there exists
  a polynomial-time uniform family of pairs of relativized quantum circuits
  $\{(Q^x_0,Q^x_1)\,:\,x\in\Gamma^{\ast}\}$ with these properties:
  \begin{enumerate}
  \item If $x \in L$, then $(Q^x_0,Q^x_1) \in \QSD^A_{\textup{yes}}$, and
  \item If $x \not\in L$, then $(Q^x_0,Q^x_1) \in \QSD^A_{\textup{no}}$.
  \end{enumerate}
\end{theorem}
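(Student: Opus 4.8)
The plan is to obtain the theorem as a relativization of the known unrelativized result, due to Watrous~\cite{Watro2002}, that \textsc{Quantum State Distinguishability} is a complete promise problem for $\QSZK$ under polynomial-time mapping reductions; the stated biconditional is precisely the relativized form of that completeness. The entire argument in \cite{Watro2002} proceeds by manipulating the quantum circuits that describe the verifier, the prover, and the honest-verifier simulator, treating them only as sequences of gates, so I would go through both implications and verify that query gates are carried along exactly as any other gate.

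For the ``only if'' direction, suppose $L\in\QSZK^A$. Using the honest-verifier definition of the class---which, as discussed above, only requires the easy containment of the equivalence established in \cite{Watro2009} and hence loses no generality for a non-containment result---we are given a polynomial-time uniform family of relativized circuits describing a verifier $V$, together with a polynomial-time uniform family of relativized circuits describing a simulator whose output has negligible trace distance to the view of $V$ on inputs in $L$. I would then apply the reduction of \cite{Watro2002}: after normalizing the message structure of the proof system, one reads off from the verifier's and simulator's circuits, for each input $x$, a pair of relativized circuits $(Q^x_0,Q^x_1)$ such that $\rho_0(A)$ and $\rho_1(A)$ have large trace distance when $x\in L$ (because the simulator faithfully reproduces the honest interaction, which is accepted with high probability) and small trace distance when $x\notin L$ (because a large distance would allow a prover to make $V$ accept with a probability ruled out by soundness). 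Once the oracle is fixed a query gate is merely a fixed unitary, so it is handled in this reduction exactly like a Hadamard or Toffoli gate; the circuits $Q^x_0,Q^x_1$ produced are themselves relativized circuits querying the same black boxes, and polynomial-time uniformity of $\{(Q^x_0,Q^x_1)\}$ follows from that of the verifier and simulator families, the reduction being a fixed polynomial-time transformation of circuit descriptions.

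For the ``if'' direction, given a polynomial-time uniform family $\{(Q^x_0,Q^x_1)\}$ with the stated properties, I would run the honest-verifier $\QSZK$ protocol for $\QSD$ from \cite{Watro2002}: on input $x$ the verifier uses $Q^x_0$ and $Q^x_1$, retaining purifying registers as needed, to engage the prover in an interaction that effectively forces the prover to distinguish $\rho_0(A)$ from $\rho_1(A)$, and accepts accordingly. Completeness on yes-instances uses $\tfrac12\tnorm{\rho_0(A)-\rho_1(A)}\geq\tfrac23$, soundness on no-instances uses $\tfrac12\tnorm{\rho_0(A)-\rho_1(A)}\leq\tfrac13$, and honest-verifier zero-knowledge holds because on a yes-instance the verifier's view is essentially determined by its private randomness and can be reproduced by a circuit that simply prepares the appropriate state via $Q^x_0$ or $Q^x_1$. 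The query gates occurring in $Q^x_0$ and $Q^x_1$ are inherited by the verifier's and simulator's circuits, so the proof system obtained is a relativized one querying $A$, witnessing $L\in\QSZK^A$.

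The one point genuinely deserving care---and the step I expect to be the main, though modest, obstacle---is confirming that nothing in either reduction of \cite{Watro2002} is inherently non-relativizing: that the ingredients used (purification and Uhlmann's theorem, the Fuchs--van de Graaf inequalities, gap amplification, and the transformations normalizing the number and structure of messages) are all purely quantum-information-theoretic operations on circuits, and that at no point is the verifier's Turing-machine description exploited in a way sensitive to oracle access, nor a non-relativizing ingredient such as the PCP theorem invoked. With that bookkeeping complete, the theorem follows.
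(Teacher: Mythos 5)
Your proposal is correct and takes essentially the same route as the paper: the paper offers no detailed proof of this theorem, only the remark that the completeness proof of $\QSD$ for $\QSZK$ from \cite{Watro2002} extends directly to the relativized setting because query gates can be treated like any other gate, and your write-up is a faithful elaboration of exactly that observation (including the point that the honest-verifier definition suffices for a non-containment result).
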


\subsubsection*{Unambiguous polynomial-time}

Finally, the class $\UP$, which stands for \emph{unambiguous polynomial time},
is a restricted variant of $\NP$ that was first defined by
Valiant~\cite{Valia1976}.
A language $L$ is in the class $\UP$ if there exists a polynomial-time
nondeterministic Turing machine $M$ satisfying these conditions:
\begin{enumerate}
\item If $x \in L$, then $M$ has exactly one accepting computation path on
  input $x$.
\item If $x \not\in L$, then $M$ has no accepting computation paths on input
  $x$.
\end{enumerate}
Relativized variants of $\UP$ are defined in the natural way, by allowing the
machine $M$ to make oracle queries.

\section{Adversary bound for output state discrimination}\label{sec:adversary}

In this section we prove a lemma that will be used to prove that certain
problems fall outside of $\QSZK$ relative to some oracles.

Before proving the main lemma, we will prove a somewhat more basic lemma that
implies that a quantum circuit must, on average, make a large number of
queries to a black box in order to produce output states that allow one to
discriminate between an empty black box and a black box containing one string.
The proof makes use of the positive weights quantum adversary
method~\cite{Ambai2002}.

\begin{lemma}
  \label{FidelityBound}
  Let $Q$ be a quantum circuit that takes no input and makes $T$ queries to a
  $n$\--bit black box and let $\rho(B)$ denote the output of $Q$ when the black
  box is described by $B \subseteq \Sigma^n$.
  It holds that
  \begin{equation}
    \frac{1}{2^n} \sum_{x \in \Sigma^n}
    \fid\bigl(\rho(\{x\}), \rho(\varnothing)\bigr) \geq 1 - \frac{2T}{2^{n/2}}.
  \end{equation}
\end{lemma}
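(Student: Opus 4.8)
The plan is to pass to the unitary purification of $Q$, use Uhlmann's theorem to replace the fidelity by the overlap of a fixed pair of pure states, and then run a telescoping (hybrid) argument in which the averaging over $x\in\Sigma^n$ is precisely what controls the total query magnitude --- this is the positive-weights adversary method of Ambainis with uniform weights over the instance pairs $(\varnothing,\{x\})$.

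First I would set up the purification. Write the unitary purification of $Q$ as $\hat Q_B = U_{T+1}K_B U_T\cdots U_2 K_B U_1$, where $U_1,\dots,U_{T+1}$ are fixed, oracle-independent unitaries and $K_B$ is the query gate for the $n$-bit black box $B$, and put $\ket{\psi_B} = \hat Q_B\ket{0\cdots 0}$. Tracing out the erased register of $\hat Q$ gives $\rho(B)$, so $\ket{\psi_B}$ is a purification of $\rho(B)$; by Uhlmann's theorem (equation~\eqref{eq:Fidelity}), $\fid\bigl(\rho(\{x\}),\rho(\varnothing)\bigr) \ge \abs{\ip{\psi_{\{x\}}}{\psi_\varnothing}} \ge \operatorname{Re}\ip{\psi_{\{x\}}}{\psi_\varnothing} = 1 - \tfrac12\bignorm{\ket{\psi_{\{x\}}} - \ket{\psi_\varnothing}}^2$, since both states are unit vectors. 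So it suffices to show $\frac1{2^n}\sum_{x}\bignorm{\ket{\psi_{\{x\}}} - \ket{\psi_\varnothing}}^2 \le 4T^2/2^n$.

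Next comes the hybrid argument. For $j=0,1,\dots,T$ let $\ket{G^x_j}$ be the state obtained by running $\hat Q$ with the query gate $K_\varnothing$ for the first $j$ queries and $K_{\{x\}}$ for the remaining $T-j$ queries, so $\ket{G^x_0} = \ket{\psi_{\{x\}}}$ and $\ket{G^x_T} = \ket{\psi_\varnothing}$. Consecutive hybrids differ only at the $j$-th query gate, and because $K_\varnothing = \I$, the state fed into that gate is $\ket{\phi_j} = U_j\cdots U_1\ket{0\cdots 0}$, which does \emph{not} depend on $x$; unitary invariance of the norm then gives $\bignorm{\ket{G^x_{j-1}} - \ket{G^x_j}} = \bignorm{(K_{\{x\}} - \I)\ket{\phi_j}}$. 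Using the identity $K_{\{x\}} - \I = \ketbra{x}{x}\otimes(X-\I)$ on the query registers together with $\norm{X-\I} = 2$, this is at most $2\sqrt{q_x(j)}$, where $q_x(j) = \bignorm{(\ketbra{x}{x}\otimes\I)\ket{\phi_j}}^2$ is the probability that the address register equals $x$ just before the $j$-th query in the all-empty run. The triangle inequality gives $\bignorm{\ket{\psi_{\{x\}}} - \ket{\psi_\varnothing}} \le 2\sum_{j=1}^T\sqrt{q_x(j)}$, and Cauchy--Schwarz yields $\bignorm{\ket{\psi_{\{x\}}} - \ket{\psi_\varnothing}}^2 \le 4T\sum_{j=1}^T q_x(j)$.

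Finally, I would sum over $x\in\Sigma^n$ and interchange the two sums: since $\sum_x \ketbra{x}{x} = \I$ on the address register and $\ket{\phi_j}$ is $x$-independent, $\sum_x q_x(j) = \bignorm{\ket{\phi_j}}^2 = 1$ for every $j$, hence $\sum_x\bignorm{\ket{\psi_{\{x\}}} - \ket{\psi_\varnothing}}^2 \le 4T\sum_{j=1}^T\sum_x q_x(j) \le 4T^2$. Dividing by $2^n$ and combining with the first step gives $\frac1{2^n}\sum_x\fid\bigl(\rho(\{x\}),\rho(\varnothing)\bigr) \ge 1 - 2T^2/2^n$, which in turn is at least $1 - 2T/2^{n/2}$: trivially when $T > 2^{n/2}$ (the right-hand side is negative), and because $u^2 \le u$ for $0\le u\le 1$ when $T\le 2^{n/2}$. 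The only real subtlety is the bookkeeping in the hybrid step: the hybrids must be arranged so that the query-input state $\ket{\phi_j}$ comes from the \emph{all-empty} run and is therefore independent of $x$, since that independence is exactly what makes $\sum_x q_x(j)=1$ and lets the uniform-weight adversary bound go through --- ordering the hybrids the other way would leave one comparing query magnitudes drawn from distinct runs, with no evident reason for them to sum to at most $1$.
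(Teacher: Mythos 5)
Your proof is correct, and it in fact establishes the quantitatively stronger bound $1 - 2T^2/2^n$, from which the stated inequality follows exactly as you observe. It does, however, take a genuinely different route from the paper's. The paper runs an adversary-style progress argument: it defines $f(t) = \sum_{x\in\Sigma^n}\abs{\ip{\psi_t(\{x\})}{\psi_t(\varnothing)}}$ for the purified intermediate states and shows that a single query decreases $f$ by at most $2\sum_x \bignorm{(\bra{x}\otimes\I\otimes\I)\ket{\psi_t(\varnothing)}} \leq 2\cdot 2^{n/2}$, the last step being Cauchy--Schwarz over $x$ at a fixed time step; summing the per-step losses gives $1 - 2T/2^{n/2}$. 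You instead run a BBBV-style hybrid argument: interpolate between the two \emph{final} states through $T$ hybrids, bound $\bignorm{\ket{\psi_{\{x\}}}-\ket{\psi_{\varnothing}}}$ by twice the sum over $t$ of the query magnitudes $\sqrt{q_x(t)}$ taken from the empty-oracle run, apply Cauchy--Schwarz over $t$ at fixed $x$, and only then sum over $x$ exactly using $\sum_x q_x(t)=1$. Reversing the order in which Cauchy--Schwarz and the sum over $x$ are applied is what buys you the quadratically better infidelity $2(T/2^{n/2})^2$ in place of $2(T/2^{n/2})$; the paper's version is slightly shorter and is the form that generalizes to the weighted adversary method. (There is a mild irony here: the paper frames its argument as replacing the hybrid lemma of Bennett et al.~\cite{BenneBBV1997} with the adversary method, and you have put the hybrid argument back in --- either suffices for the Main Lemma, which only needs $T = O(2^{n/2})$ up to constants.) Your closing remark about the ordering of the hybrids identifies exactly the right point of care: the state entering the differing query gate must come from the all-empty run, so that it is independent of $x$ and the query magnitudes sum to $1$ over $x$.
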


\begin{proof}
  Let $R$ be a unitary quantum circuit that purifies $Q$.
  For a given black box $B$, the unitary operator corresponding to the action
  of $R$ can be expressed as
  \begin{equation}
    U_T(K_B \otimes \I)U_{T-1}(K_B \otimes \I)\cdots
    U_1(K_B \otimes \I)U_0
  \end{equation}
  where each $U_t$ is a unitary operator that is independent of $B$ (and $K_B$
  represents a query gate to $B$ as already mentioned).
  Let $\ket{\psi_t(B)}$ represent the state immediately after the unitary
  operation $U_t$ is performed, assuming the computation begins with all qubits
  initialized to the $\ket{0}$ state:
  \begin{equation}
    \ket{\psi_t(B)} = U_t(K_B \otimes \I)U_{t-1}(K_B \otimes\I)\cdots
    U_0\ket{0 \cdots 0}.
  \end{equation}

  Next, define a progress function
  \begin{equation}
    f(t) = \sum_{x \in \Sigma^n}
    \bigabs{\bigip{\psi_t(\{x\})}{\psi_t(\varnothing)}}
  \end{equation}
  for all $t\in\{0,\ldots,T\}$.
  Because $R$ purifies $Q$, it holds that $\ket{\psi_T(B)}$ purifies
  $\rho(B)$ (for any choice of a black box $B$), and therefore
  \begin{equation}\label{eq:Rel-f-fid}
    f(T) \leq \sum_{x\in\Sigma^n}
    \fid\bigl(\rho(\{x\}),\rho(\varnothing)\bigr)
  \end{equation}
  by the fact that the fidelity function is non-decreasing under partial
  tracing.
  It holds that $\ket{\psi_0(\{x\})} = \ket{\psi_0(\varnothing)}$, and
  therefore
  \begin{equation}\label{eq:f-zero}
    f(0) = \sum_{x \in \Sigma^n}
    \abs{\ip{\psi_0(\{x\})}{\psi_0(\varnothing)}} = 2^n.
  \end{equation}
  As
  \begin{equation}
    \ket{\psi_{t+1}(B)} = U_{t+1}(K_B \otimes \I)\ket{\psi_t(B)},
  \end{equation}
  it follows that
  \begin{equation}
    \bigabs{\bigip{\psi_{t+1}(\{x\})}{\psi_{t+1}(\varnothing)}}
    = \bigabs{\bigbra{\psi_t(\{x\})} K_{\{x\}}\otimes\I
      \bigket{\psi_t(\varnothing)}}.
  \end{equation}
  Making use of the expression \eqref{eq:oracle-formula}, one finds that
  \begin{equation}
    \begin{multlined}
      \bigbra{\psi_t(\{x\})} K_{\{x\}} \otimes \I
      \bigket{\psi_t(\varnothing)}\\[1mm]
      = \bigbra{\psi_t(\{x\})} \ketbra{x}{x}
      \otimes (X-\I) \otimes \I\, \bigket{\psi_t(\varnothing)}
      + \bigip{\psi_t(\{x\})}{\psi_t(\varnothing)}.
    \end{multlined}
  \end{equation}
  Therefore, by the Cauchy--Schwarz and triangle inequalities, and
  making use of the fact that $\norm{X - \I} = 2$, one obtains
  \begin{equation}
    \begin{multlined}
      \bigabs{\bigbra{\psi_t(\{x\})} K_{\{x\}} \otimes
        \I \bigket{\psi_t(\varnothing)}}\\[1mm]
      \geq \bigabs{\bigip{\psi_t(\{x\})}{\psi_t(\varnothing)}}
      - 2\bignorm{(\bra{x} \otimes \I \otimes \I)\ket{\psi_t(\varnothing)}}.
    \end{multlined}
  \end{equation}
  Using the Cauchy--Schwarz inequality again, it follows that
  \begin{equation}
    \begin{aligned}
      f(t+1) & = \sum_{x\in\Sigma^n}
      \bigabs{\bigip{\psi_{t+1}(\{x\})}{\psi_{t+1}(\varnothing)}}\\
      & \geq f(t) - 2\sum_{x\in\Sigma^n}
      \bignorm{(\bra{x} \otimes \I \otimes \I)\bigket{\psi_t(\varnothing)}}\\
      & \geq f(t) - 2\cdot 2^{n/2}.
    \end{aligned}
  \end{equation}
  Consequently,
  \begin{equation} f(T) = f(0) + \sum_{t = 0}^{T-1} (f(t+1) - f(t)) \geq
    2^n - 2T\cdot 2^{n/2}.
  \end{equation}
  Finally, using relation \eqref{eq:Rel-f-fid} one finds that
  \begin{equation} \frac{1}{2^n}\sum_{x \in \Sigma^n}
    \fid\bigl(\rho(\{x\}),\rho(\varnothing)\bigr) \geq 1 - \frac{2T}{2^{n/2}}
  \end{equation}
  as required.
\end{proof}

We now present the main lemma, which is proved through the use of
\myref{Lemma}{FidelityBound} along with standard arguments.

\begin{lemma}[Main Lemma]
  \label{lemma:main}
  Let $Q_0$ and $Q_1$ be quantum circuits, both taking no input qubits,
  producing the same number of output qubits, and making at most $T$ queries to
  an $n$\--bit black box, and let $\rho_0(B)$ and $\rho_1(B)$ denote the output
  states of these circuits when the black box is described by
  $B\subseteq\Sigma^n$.
  If $T$ and $n$ satisfy
  \begin{equation}
    T \leq \frac{\sqrt{2^n}}{20736},
  \end{equation}
  then there are at least $\frac{2}{3}2^n$ distinct choices of a
  string $x\in\Sigma^n$ such that
  \begin{equation}
    \biggabs{\frac{1}{2} \Bigtnorm{\rho_0(\{x\}) - \rho_1(\{x\})} - 
      \frac{1}{2} \Bigtnorm{\rho_0(\varnothing) - \rho_1(\varnothing)}}
      < \frac{1}{6}.
  \end{equation}
\end{lemma}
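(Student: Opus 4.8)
The plan is to deduce the Main Lemma from \myref{Lemma}{FidelityBound} by a pair of counting arguments, one for each circuit, combined with the Fuchs--van de Graaf inequalities to pass between fidelity and trace distance. First I would apply \myref{Lemma}{FidelityBound} separately to $Q_0$ and to $Q_1$, obtaining
\begin{equation}
  \frac{1}{2^n}\sum_{x\in\Sigma^n}\fid\bigl(\rho_i(\{x\}),\rho_i(\varnothing)\bigr)
  \geq 1 - \frac{2T}{2^{n/2}}
\end{equation}
for $i\in\{0,1\}$. Writing $\varepsilon = 2T/2^{n/2}$, note that the hypothesis $T\leq \sqrt{2^n}/20736$ makes $\varepsilon$ tiny (on the order of $2/20736$), which is the slack I will spend later. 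Since each fidelity is at most $1$, a Markov-type argument shows that for each $i$ the number of strings $x$ with $\fid(\rho_i(\{x\}),\rho_i(\varnothing)) < 1-\delta$ is at most $(\varepsilon/\delta)2^n$; choosing $\delta$ appropriately and taking a union bound over $i=0,1$, all but a $2\varepsilon/\delta$ fraction of strings $x$ satisfy $\fid(\rho_i(\{x\}),\rho_i(\varnothing))\geq 1-\delta$ for \emph{both} $i$.

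For such a ``good'' string $x$, I would convert the fidelity bound into a trace-distance bound via the right-hand Fuchs--van de Graaf inequality~\eqref{eq:FvdG}: $\fid\geq 1-\delta$ gives $\frac{1}{2}\tnorm{\rho_i(\{x\})-\rho_i(\varnothing)} \leq \sqrt{1-(1-\delta)^2}\leq \sqrt{2\delta}$. Then the triangle inequality for the trace norm yields
\begin{equation}
  \biggabs{\frac{1}{2}\Bigtnorm{\rho_0(\{x\})-\rho_1(\{x\})}
    - \frac{1}{2}\Bigtnorm{\rho_0(\varnothing)-\rho_1(\varnothing)}}
  \leq \frac{1}{2}\Bigtnorm{\rho_0(\{x\})-\rho_0(\varnothing)}
  + \frac{1}{2}\Bigtnorm{\rho_1(\{x\})-\rho_1(\varnothing)}
  \leq 2\sqrt{2\delta}.
\end{equation}
To force the right-hand side below $1/6$ it suffices to take $\delta = 1/1152$ (since $2\sqrt{2/1152}=1/12<1/6$, with room to spare). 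Finally I must check the counting works out: with $\delta = 1/1152$ the fraction of bad strings is at most $2\varepsilon/\delta = 2\cdot(2T/2^{n/2})\cdot 1152$, and the constant $20736$ in the hypothesis is chosen precisely so that $T\leq\sqrt{2^n}/20736$ makes this at most $1/3$, leaving at least $\frac{2}{3}2^n$ good strings. (A quick sanity check: $4\cdot 1152 = 4608$, and $20736/4608 = 4.5$, so the fraction of bad strings is at most $4608/20736 = 1/4.5 < 1/3$, as needed; any clean accounting of the numeric constants along these lines will close the argument, and one has slack to absorb rounding.)

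I do not anticipate a genuine obstacle here: the only real content is \myref{Lemma}{FidelityBound}, and everything else is bookkeeping with Markov's inequality, the triangle inequality, and Fuchs--van de Graaf. The one place to be careful is the chain of numeric constants---making sure that the choice of $\delta$ simultaneously drives the trace-distance bound below $1/6$ and keeps the bad fraction below $1/3$ under the stated hypothesis on $T$---but since the hypothesis carries a large constant ($20736$), there is ample room, and I would simply pick round values of $\delta$ and verify the two inequalities directly rather than optimizing.
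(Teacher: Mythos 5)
Your proposal is correct and follows essentially the same route as the paper: apply \myref{Lemma}{FidelityBound} to each circuit, use a Markov-type counting argument and a union bound to isolate a $\frac{2}{3}2^n$-fraction of good strings, and finish with Fuchs--van de Graaf and the triangle inequality. The only (immaterial) difference is that you run Markov on the fidelities and convert to trace distance pointwise, whereas the paper first converts the averaged fidelity bound to an averaged trace-distance bound and then applies Markov; your constants check out.
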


\begin{proof}
  Define sets $S_0,S_1\subseteq\Sigma^n$ as follows:
  \begin{equation}
    \begin{aligned}
      S_0 & = \biggl\{ x \in \Sigma^n \;:\;
      \bignorm{\rho_0(\{x\}) - \rho_0(\varnothing)}_1<\frac{1}{6}\biggr\},\\
      S_1 & = \biggl\{ x \in \Sigma^n \;:\;
      \bignorm{\rho_1(\{x\}) - \rho_1(\varnothing)}_1 < \frac{1}{6}
      \biggr\}.
    \end{aligned}
  \end{equation}
  Also define $S = S_0\cap S_1$.
  For every $x\in S$, it follows from the triangle inequality that
  \begin{equation}
    \begin{multlined}
      \biggabs{ \frac{1}{2}\Bignorm{\rho_0(\{x\}) - \rho_1(\{x\})}_1
        - \frac{1}{2}\Bignorm{\rho_0(\varnothing) - \rho_1(\varnothing)}_1}\\
      \leq \frac{1}{2}\Bignorm{\rho_0(\{x\}) - \rho_0(\varnothing)}_1
      + \frac{1}{2}\Bignorm{\rho_1(\{x\}) - \rho_1(\varnothing)}_1
      < \frac{1}{6}.
    \end{multlined}
  \end{equation}

  We will now prove that
  \begin{equation}
    \abs{S_0} \geq \frac{5}{6} 2^n
    \quad\text{and}\quad
    \abs{S_1} \geq \frac{5}{6}2^n.
  \end{equation}
  The same argument, applied separately to $Q_0$ and $Q_1$, establishes both
  inequalities, so let us focus on $Q_0$ and prove the first inequality.
  By making use of the Fuchs--van de Graaf inequalities, we conclude from
  \myref{Lemma}{FidelityBound} that
  \begin{equation}
    \frac{1}{2^n}\sum_{x \in \Sigma^n} \tnorm{\rho_0(\{x\}) -
      \rho_0(\varnothing)} \leq 4\sqrt{\frac{T}{2^{n/2}}}.
  \end{equation}
  Considering only those strings not contained in $S_0$ yields
  \begin{equation}
    \frac{2^n - \abs{S_0}}{6\cdot 2^n}
    \leq
    \frac{1}{2^n}
    \sum_{x\not\in S_0}\tnorm{\rho_0(\{x\}) - \rho_0(\varnothing)}
    \leq 4 \sqrt{\frac{T}{2^{n/2}}},
  \end{equation}
  which yields the required bound given the assumptions of the lemma.

  By the union bound there are at most $2^n/3$ strings that are either not
  contained in $S_0$ or not contained in $S_1$, which implies that
  $\abs{S} \geq \frac{2}{3}2^n$, as required.
\end{proof}

\section{Oracle Separations}\label{sec:oracle-separations}

In this section we apply the main lemma proved in the previous section to prove
the existence of oracles that establish limitations on the class $\QSZK$.

\subsection{Separating UP intersect coUP from
  QSZK}\label{sec:UP-cap-coUP-separation}

We begin by proving the existence of an oracle relative to which
$\UP \cap \coUP$ is not contained in $\QSZK$.

\begin{theorem}\label{thm:QSZK-UP-oracle}
  There exists an oracle $A$ for which
  $(\UP^A \cap \coUP^A)\not\subseteq \QSZK^A$.
\end{theorem}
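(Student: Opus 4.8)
The plan is to use a standard diagonalization argument against all polynomial-time $\QSZK$ machines, combined with the Main Lemma to show that no small-query $\QSD^A$ pair can detect the ``planted string'' that encodes our $\UP \cap \coUP$ language. First I would fix an encoding: for each length $n$, the oracle $A$ will contain at most one string of length $n$, and the language $L$ will be defined so that $0^n \in L$ if and only if $A$ contains some string of length $n$ (and if so, that unique string serves as the witness, with its first bit, say, used to make membership decidable by a single query once the witness is guessed). This makes $L \in \UP^A \cap \coUP^A$ essentially by construction: a nondeterministic machine guesses the length-$n$ string $x$, queries $A$ to confirm $x \in A$, and accepts; there is exactly one accepting path when $0^n \in L$ and none otherwise, and the same works for the complement with the roles reversed using a distinguished bit of the witness.

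Next I would set up the diagonalization. Enumerate all candidate $\QSZK^A$ machines, which by \myref{Theorem}{thm:QSD-QSZK-complete} amount to polynomial-time uniform families $\{(Q^{0^n}_0, Q^{0^n}_1)\}$ of relativized circuit pairs with some fixed polynomial query bound $p(n)$. We build $A$ in stages, one machine per stage, choosing lengths $n$ large enough that $p(n) \le \sqrt{2^n}/20736$ and large enough to be ``fresh'' (beyond all lengths touched so far). At stage $i$, with the oracle on lengths below $n$ already fixed, consider the $i$-th machine. The circuits $Q^{0^n}_0, Q^{0^n}_1$ on input $0^n$ make at most $p(n)$ queries to the length-$n$ black box (queries to other lengths are already determined and can be absorbed into the circuit). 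Apply the Main Lemma: there are at least $\tfrac{2}{3} 2^n$ strings $x$ for which
\begin{equation}
  \biggabs{\tfrac{1}{2}\Bigtnorm{\rho_0(\{x\}) - \rho_1(\{x\})} - \tfrac{1}{2}\Bigtnorm{\rho_0(\varnothing) - \rho_1(\varnothing)}} < \tfrac{1}{6}.
\end{equation}
Now distinguish two cases for the empty black box. If $\tfrac{1}{2}\tnorm{\rho_0(\varnothing) - \rho_1(\varnothing)} \le \tfrac{1}{2}$, then for every such $x$ the distance with black box $\{x\}$ is $< \tfrac{2}{3}$, so $(Q^{0^n}_0, Q^{0^n}_1)$ fails to be a yes-instance; we then set the length-$n$ part of $A$ to $\{x\}$ for one such $x$, forcing $0^n \in L$ while the machine does not accept. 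If instead $\tfrac{1}{2}\tnorm{\rho_0(\varnothing) - \rho_1(\varnothing)} > \tfrac{1}{2}$, then with the empty black box the pair is not a no-instance, so we leave length $n$ empty, forcing $0^n \notin L$ while the machine (if it purports to decide $L$) fails on the empty-oracle instance. Either way the $i$-th machine is diagonalized against on input $0^n$.

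Finally I would check that the stages do not interfere: each stage only fixes the oracle at one fresh length, and the decisions at other lengths are irrelevant to the fixed machine's behavior on $0^n$ up to the query bound (since the circuit makes at most $p(n)$ queries and we may, if desired, also reserve padding so that the machine cannot query the tiny number of other modified lengths — but in fact the other lengths are fixed before stage $i$, so there is nothing to reserve). Taking $A = \bigcup_i A_i$ gives an oracle with $L \in \UP^A \cap \coUP^A$ but $L \notin \QSZK^A$. The main obstacle is the bookkeeping around the two cases of the empty-oracle distance and confirming in each case that the resulting instance genuinely violates the promise condition needed to contradict the machine computing $L$; the quantitative slack ($\tfrac{1}{6}$ gap, thresholds $\tfrac{1}{3}$ and $\tfrac{2}{3}$) is exactly what makes both cases go through, so the constant $20736$ in the Main Lemma must be tracked carefully but presents no conceptual difficulty.
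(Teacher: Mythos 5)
There is a genuine gap in the $\coUP$ half of your argument. You define $L$ so that $0^n \in L$ if and only if $A$ contains \emph{some} string of length $n$, and your diagonalization exploits exactly this freedom: in your second case you leave length $n$ empty to force $0^n \notin L$. But then $\overline{L} = \{0^n : A \cap \Sigma^n = \varnothing\}$ has no witness on its yes-instances---a polynomial-time nondeterministic machine cannot certify emptiness of $A \cap \Sigma^n$ along a single accepting path with polynomially many queries---so $\overline{L} \in \UP^A$ is not established, and indeed fails for the classic reason that emptiness has no short certificate. Your phrase ``the same works for the complement with the roles reversed using a distinguished bit of the witness'' only makes sense if a witness is \emph{always} present, which contradicts your Case 2. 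The paper resolves this by insisting that $A$ contain \emph{exactly} one string of each length and defining $0^n \in L$ by the \emph{first bit} of that unique string; then both $L$ and $\overline{L}$ have a unique witness at every length, and the diagonalization chooses the first bit of the planted string rather than its presence or absence. \myref{Lemma}{lemma:main} still supplies what is needed: among the at least $\frac{2}{3}2^n$ strings $x$ satisfying
\begin{equation}
  \biggabs{\tfrac{1}{2}\bigtnorm{\rho_0(\{x\}) - \rho_1(\{x\})} - \tfrac{1}{2}\bigtnorm{\rho_0(\varnothing) - \rho_1(\varnothing)}} < \tfrac{1}{6},
\end{equation}
at least $\frac{2}{3}2^n - \frac{1}{2}2^n = \frac{1}{6}2^n$ begin with $1$ (and likewise with $0$), so whichever of your two cases holds for the empty black box, there is a planted string with the appropriate first bit that defeats the pair.

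A secondary issue: you assert that ``the other lengths are fixed before stage $i$,'' but the circuits on input $0^n$ may query lengths \emph{larger} than $n$ that have not yet been decided. This is repairable in your stage-by-stage framework (fix those polynomially many query answers and avoid the touched strings when planting at later lengths, which is possible since $\frac{1}{6}2^n$ candidates remain at each length), and the paper sidesteps it differently by choosing $A$ at random from the admissible family, hard-coding off-length query gates, and invoking independence across lengths together with countability of the circuit families. Your deterministic diagonalization is a legitimate alternative route, but only once both of these points are patched.
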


\noindent
The remainder of this subsection is devoted to a proof of this theorem,
divided according to the main steps of the proof.

\subsubsection*{Problem specification and inclusion in UP intersect coUP}

The basic idea of the proof is to consider oracles that contain exactly one
string of each length along with the computational problem of determining the
first bit of this unique string for a given length.
Fortnow and Rogers \cite{FortnR1999} proved that a $\BQP$ machine requires
an exponential number of queries to solve this problem, and our proof
represents an extension of this argument to $\QSZK$.
In essence, our proof replaces their use of Lemma~4.7 of Bennett et
al.~\cite{BenneBBV1997} with \myref{Lemma}{lemma:main}.

The set of oracles under consideration is
\begin{equation}
  \A = \{ A \subseteq \Sigma^* :
  \text{$\abs{A \cap \Sigma^n} = 1$ for all $n\geq 1$} \},
\end{equation}
and, for any oracle $A \in \A$, we define a language $L(A)$ over the
single-letter alphabet $\Gamma = \{0\}$ as
\begin{equation}
  L(A) = \bigl\{0^n : \text{
    $n\geq 1$ and $1y \in A$ for some $y \in \Sigma^{n-1}$} \bigr\}.
\end{equation}
Our aim is to prove that there exists an oracle $A\in\A$ such that
\begin{equation}
  L(A)\in \UP^A \cap \coUP^A
\end{equation}
but
\begin{equation}
  L(A) \not\in \QSZK^A.
\end{equation}

Observe that for any oracle $A \in \A$, the inclusion $L(A) \in \UP^A$ is
established by the simple nondeterministic procedure presented in
\myref{Figure}{fig:L-in-UP}.
\begin{figure}[t]
  \noindent\hrulefill\vspace{-2mm}
  
  \begin{tabbing}
    x\=xxx\=xxx\=\kill
    \> On input $w = 0^n$:\\[2mm]
    \>\> \emph{Reject} if $n=0$.\\[1mm]
    \>\> Nondeterministically choose a string $y \in \Sigma^{n-1}$.\\[1mm]
    \>\> If $1y\in A$ then \emph{accept} else \emph{reject}.
  \end{tabbing}
  \vspace{-3mm}
  \noindent\hrulefill

  \caption{Nondeterministic decision procedure for $L(A)$.}
  \label{fig:L-in-UP}
\end{figure}
As there is exactly one string of each positive length in an oracle
$A \in \A$, it holds that the compliment of the language $L(A)$ is given by
\begin{equation}
  \overline{L(A)} = \bigl\{0^n : \text{$n=0$ or $0y \in A$ for
    some $y \in \Sigma^{n-1}$}\bigr\}.
\end{equation}
A similar argument to the one just presented reveals that
$\overline{L(A)} \in \UP^A$ for any $A \in \A$, and therefore
$L(A) \in (\UP^A \cap \coUP^A)$ for all oracles $A\in\A$.

\subsubsection*{Black Box Separation}

It remains to prove that there exists an oracle $A\in\A$ for which
$L(A) \not\in \QSZK^A$.
This is done in two steps, the first of which is a black box separation
based on the main lemma proved in the previous section.

Fix a positive integer $n$, and let $Q_0$ and $Q_1$ be quantum circuits that
take no input and make at most $T$ queries to an $n$\--bit black box, and let
$\rho_0(B)$ and $\rho_1(B)$ denote the output states of these circuits when the
black box is described by $B\subseteq\Sigma^n$.
We will say that the pair $(Q_0,Q_1)$ is \emph{incorrect} for a given choice of
a black box $B = \{x\} \subset \Sigma^n$ if either of these conditions hold:
\begin{enumerate}
\item[1.]
  $x \in 1\,\Sigma^{n-1}$ and
  \begin{equation}
    \frac{1}{2}\bignorm{\rho_0(\{x\}) - \rho_1(\{x\})}_1 < \frac{2}{3}.
  \end{equation}
\item[2.]
  $x \in 0\,\Sigma^{n-1}$ and
  \begin{equation}
    \frac{1}{2}\bignorm{\rho_0(\{x\}) - \rho_1(\{x\})}_1 > \frac{1}{3}.
  \end{equation}
\end{enumerate}
Otherwise the pair $(Q_0,Q_1)$ is \emph{correct} for $B$.

Now suppose that $T$ and $n$ satisfy
\begin{equation}
  T \leq \frac{\sqrt{2^n}}{20736},
\end{equation}
and define
\begin{equation}
  S = \biggl\{ x \in \Sigma^n \;:\;
  \biggabs{ \frac{1}{2}\bignorm{\rho_0(\{x\}) - \rho_1(\{x\})}_1
    - \frac{1}{2}\bignorm{\rho_0(\varnothing) - \rho_1(\varnothing)}_1}
  < \frac{1}{6} \biggr\}.
\end{equation}
By \myref{Lemma}{lemma:main}, the set $S$ has cardinality at least
$\frac{2}{3}2^n$.
If it is the case that
\begin{equation}
  \label{eq:implicant1}
  \frac{1}{2}\bignorm{\rho_0(\varnothing) - \rho_1(\varnothing)}_1
  \leq \frac{1}{2},
\end{equation}
then there must therefore exist at least 
$\frac{2}{3}2^n - \frac{1}{2}2^n = \frac{1}{6}2^n$ choices of $x\in\Sigma^n$
such that the first condition listed above holds.
Similarly, if it is the case that
\begin{equation}
  \label{eq:implicant2}
  \frac{1}{2}\bignorm{\rho_0(\varnothing) - \rho_1(\varnothing)}_1
  \geq \frac{1}{2},
\end{equation}
then there must therefore exist at least 
$\frac{2}{3}2^n - \frac{1}{2}2^n = \frac{1}{6}2^n$ choices of $x\in\Sigma^n$
such that the second condition listed above holds.
One of the two implicants \eqref{eq:implicant1} and \eqref{eq:implicant2} must
hold, establishing that $(Q_0,Q_1)$ is incorrect for a uniformly chosen
black box $B = \{x\} \subset\Sigma^n$ with probability at least 1/6.

\subsubsection*{Oracle Existence}

To prove the existence of an oracle $A\in\A$ for which $L(A)\not\in\QSZK^A$, we
use the probabilistic method, along the lines of the random oracle methodology
of Bennett and Gill \cite{BenneG1981}.
Suppose that
\begin{equation}
  \Q = \bigl\{ \bigl( Q_0^n,Q_1^n\bigr) \,:\,n\in\natural\bigr\}
\end{equation}
is a polynomial-time uniform family of pairs of relativized quantum circuits,
and consider the performance of these circuits on an oracle $A\in\A$ chosen
uniformly---meaning that for each positive integer $n$, one string of length
$n$ is selected uniformly and included in $A$, with the random selections being
independent for different choices of $n$.

Let $\rho_0^n(A)$ and $\rho_1^n(A)$ and denote the states output by $Q_0^n$ and
$Q_1^n$, respectively, when the query gates in these circuits operate in a way
that is consistent with the oracle $A$.
For a given choice of $A\in\A$, the pair $(Q^n_0,Q^n_1)$ therefore incorrectly
determines membership of $1^n$ in $L(A)$, with respect to the characterization
of $\QSZK^A$ given by \myref{Theorem}{thm:QSD-QSZK-complete}, if either of these
conditions hold:
\begin{enumerate}
\item[1.]
  $A \cap \Sigma^n = \{1y\}$ for some $y\in\Sigma^{n-1}$ and
  \begin{equation}
    \frac{1}{2}\bignorm{\rho_0^n(A) - \rho_1^n(A)}_1 < \frac{2}{3}.
  \end{equation}
\item[2.]
  $A \cap \Sigma^n = \{0y\}$ for some $y\in\Sigma^{n-1}$ and
  \begin{equation}
    \frac{1}{2}\bignorm{\rho_0^n(A) - \rho_1^n(A)}_1 > \frac{1}{3}.
  \end{equation}
\end{enumerate}
We will also say that $\Q$ is \emph{incorrect} for $A\in\A$ if
$(Q_0^n,Q_1^n)$ is incorrect for $A$ for at least one choice of a positive
integer $n$.
Our aim is to prove that $\Q$ is incorrect with probability~1.

A small inconvenience arises at this point, which is that the circuits
$Q_0^n$ and $Q_1^n$ are permitted to include query gates for lengths
\emph{different} from $n$, and therefore the events that
$(Q_0^n,Q_1^n)$ is incorrect for different choices of $n$ are not necessarily
independent.
(Of course it is evident from the definition of $L(A)$ that this possibility is
not helpful for solving the problem at hand, but the point must be addressed
nevertheless.)
This inconvenience can be circumvented by making use of a general result of
Bennett and Gill, but in the present case a simple way to proceed is to define
a new family
\begin{equation}
  \R = \bigl\{ \bigl( R_0^n,R_1^n\bigr) \,:\,n\in\natural\bigr\}
\end{equation}
of quantum circuits that is identical to $\Q$ except that, for each $n$, each
of the query gates of $Q_0^n$ and $Q_1^n$ having size different from $n$ are
hard-coded.
The hard-codings are chosen so that the probability that $(R_0^n,R_1^n)$ is
incorrect for a random choice of a black box $B = \{x\}\subset\Sigma^n$ is
minimized.
It is evident that the probability that $\Q$ is incorrect is no smaller than
the probability that $\R$ is incorrect, for a random choice of $A\in\A$, and
we have independence among the events that $(R_0^n,R_1^n)$ is incorrect for a
random choice of $A\in\A$ over all choices of $n$.

By the assumption that $\Q$ is polynomial-time uniform, the circuits $R_0^n$
and $R_1^n$ include a number of $n$-bit query gates that is polynomial in $n$.
For all but finitely many choices of $n$, it must therefore hold that the
number $T$ of $n$-bit queries made by either $R_0^n$ or $R_1^n$ must satisfy
the bound $T \leq \sqrt{2^n}/ 20736$.
This implies that for all but finitely many choices of~$n$, the pair
$\bigl( R_0^n,R_1^n\bigr)$ is incorrect with probability at least 1/6.
By the independence of these events for different choices of $n$, it follows
that $\R$, and therefore $\Q$, is incorrect with probability 1.

Finally, because there are countably many polynomial-time uniform families of
pairs of relativized quantum circuits, there exists an oracle $A\in\A$ for
which $L(A)\not\in\QSZK^A$, as this is true for a random $A \in \A$ with
probability~1.

\subsection{Random Oracle Separation}

Next we consider the relationship between $\UP$ and $\QSZK$ relative to
a \emph{random oracle}, meaning that each individual string is included in the
oracle with probability 1/2, independent of every other string.
Specifically, we prove that relative to a random oracle, $\UP$ is not contained
in $\QSZK$ with probability~1.
Our proof follows the methodology introduced by Beigel \cite{Beige1989},
who proved various random oracle separations involving $\UP$ and its variants.

\begin{theorem}\label{thm:QSZK-UP-random}
  For a random oracle $A$, it holds that $\UP^A \not\subseteq \QSZK^A$
  with probability~1.
\end{theorem}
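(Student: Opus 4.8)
The plan is to mimic the structure of the previous subsection, but to adapt the oracle encoding so that the target language lies in $\UP^A$ for a \emph{random} oracle rather than for an oracle drawn from the restricted family $\A$. Following Beigel's methodology, I would not define the language directly from the raw random oracle $A$; instead I would use $A$ to encode, at each relevant length, either "no witness" or "exactly one witness." Concretely, I would partition strings of a given length into blocks and use a padding/indicator scheme so that the $\UP$ machine, given input $0^n$, searches a polynomially-describable witness space and finds either zero or one accepting path depending on the bits of $A$; the key point is that the \emph{structural} property "at most one witness" can be guaranteed by the \emph{definition} of which bit-patterns count as witnesses, so it holds for \emph{every} oracle, while membership in the language depends on $A$. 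For instance, one can take the language to record whether, within a designated block of $A$ of polynomial size, the all-zero pattern occurs or exactly one distinguished single-$1$ pattern occurs, rejecting (treating the input as a "don't care" or forcing non-membership) on all other patterns; with probability bounded away from $0$ the relevant block lands in one of the two "good" configurations, and by taking polynomially many independent blocks per length $n$ one amplifies so that for every sufficiently large $n$ at least one block is good almost surely.

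Next I would transfer the black-box separation. Here the circuits $Q_0^n, Q_1^n$ making at most $T = \mathrm{poly}(n)$ queries must, via \myref{Theorem}{thm:QSD-QSZK-complete}, distinguish the "far" case from the "close" case, where the two cases are distinguished by which of the two good block-configurations occurred. I would invoke \myref{Lemma}{lemma:main} with $n$ replaced by the block length (chosen so that $T \leq \sqrt{2^{\text{block length}}}/20736$, which holds for all but finitely many $n$ since the block length grows with $n$): the lemma says that for a $\frac{2}{3}$ fraction of single-string black boxes $\{x\}$, the quantity $\frac12\tnorm{\rho_0(\{x\})-\rho_1(\{x\})}$ is within $\frac16$ of its value on the empty box, and in particular the circuits cannot simultaneously push it above $\frac23$ on the "$1$-witness" configurations and below $\frac13$ on the "$0$-witness" (empty) configuration. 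As in the previous proof, whichever side of $\frac12$ the empty-box distinguishability falls on, a constant fraction (at least $\frac16$) of the single-string configurations are handled incorrectly, so the pair $(Q_0^n,Q_1^n)$ is incorrect on a random good block with probability at least some constant $c>0$.

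Then comes the probabilistic-method step, which I expect to be slightly more delicate here than in the $\A$-oracle case and is the main obstacle. Two issues arise: (i) as before, the circuits $Q_i^n$ may query strings of lengths other than $n$, so the events "incorrect at $n$" need not be independent — this I would handle exactly as in the previous subsection, by hard-coding all query gates of lengths $\neq n$ to their worst-case (incorrectness-minimizing) values, yielding a modified family $\R$ whose incorrectness events \emph{are} independent across $n$ and whose incorrectness probability lower-bounds that of $\Q$; (ii) for a random oracle I must also account for the event that \emph{none} of the polynomially many blocks at length $n$ is in a good configuration, in which case the language is "ill-defined" at that length — but since each block is good with constant probability and the blocks are independent, the probability that all blocks at length $n$ are bad is exponentially small in $n$, hence summable, so by Borel–Cantelli only finitely many lengths are bad almost surely, and on the (infinitely many) good lengths the incorrectness events still occur with probability $\geq c$ independently. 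Combining, for each fixed family $\Q$ the probability that $\Q$ correctly decides the language at \emph{all} sufficiently large $n$ is $0$; a countable union over all polynomial-time uniform families $\Q$ then gives that, with probability $1$, the random oracle $A$ satisfies $L(A)\in\UP^A$ but $L(A)\notin\QSZK^A$, which is the claimed statement. The one genuine subtlety to get right is the exact encoding in step one: it must simultaneously (a) force the $\UP$ "at most one witness" property for \emph{all} oracles, (b) make the yes/no instances correspond to a single-string-versus-empty black box at a block length that grows fast enough that $T\leq\sqrt{2^{\text{blocklen}}}/20736$ eventually, and (c) keep the $\UP$ machine polynomial-time and the family of circuits polynomial-time uniform.
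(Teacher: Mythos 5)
Your high-level architecture matches the paper's: define an oracle-dependent unary language that lies in $\UP^A$ almost surely, reduce the $\QSZK$ question to output-state discrimination via \myref{Theorem}{thm:QSD-QSZK-complete}, invoke \myref{Lemma}{lemma:main} for a per-length incorrectness probability, restore independence across lengths by hard-coding query gates of other lengths, and finish with Borel--Cantelli plus a countable union over circuit families. But the step you yourself flag as ``the one genuine subtlety''---the encoding---is exactly where the proposal breaks, and the concrete scheme you offer cannot work. If the designated block has polynomial size, the language is decidable by classically querying every bit of the block, so $L(A)\in\P^A\subseteq\QSZK^A$ and there is no separation; equivalently, the effective black-box length is $O(\log n)$, and the hypothesis $T\leq\sqrt{2^{\ell}}/20736$ of \myref{Lemma}{lemma:main} fails for all but the smallest polynomials $T$. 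If instead you make the block superpolynomially large so that the lemma applies, then for a truly random oracle the ``good'' configurations (Hamming weight $0$ or $1$ on the block) occur with doubly exponentially small probability, so you cannot arrange that some block is good for all but finitely many $n$; amplifying over polynomially many blocks does not help. This tension is the heart of the theorem, and your proposal does not resolve it.

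The paper resolves it by making each witness a \emph{conjunction} of about $n$ oracle bits drawn from an exponentially large candidate set: with $m=\lfloor\log n\rfloor$ and $N=2^m$, a witness is a string $x\in\Sigma^{N-m}$ with $x\Sigma^m 0^{n-N}\subseteq A$. Each candidate is a witness with probability $2^{-N}$ and there are $2^{N-m}$ candidates, so the expected number of witnesses is $1/N\approx 1/n$; hence $\op{Pr}[\text{at least two witnesses}]<4/n^2$ is summable (first Borel--Cantelli gives $L(A)\in\UP^A$ a.s.\ after hard-coding the finitely many bad lengths---note that unambiguity is \emph{not} forced for every oracle, contrary to your goal (a), nor does it need to be), while $\op{Pr}[\text{exactly one witness}]\approx 1/n$ is not summable, which is exactly what the second Borel--Cantelli step requires. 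Two further points you gloss over: (i) for a random oracle the ``no'' instances are not empty black boxes but arbitrary $C\in\B^n_0$, so one must simulate a query to $C\cup B\Sigma^m0^{n-N}$ by a single query to an $(N-m)$-bit black box $B$ with $C$ hard-coded, and then run a double-counting argument over pairs $(C,x)$ to conclude incorrectness with probability at least $1/(3n)$---not a constant, but still non-summable; (ii) your claimed constant per-length incorrectness probability has no support once a workable encoding is fixed. As written, the proposal is a correct outline with the essential construction missing.
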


\noindent
Again, the remainder of the subsection is devoted to a proof of this theorem,
divided according to the main steps of the proof.

\subsubsection*{Problem specification and inclusion in UP (with probability 1)}

For a given positive integer $n$, we define $m = \floor{\log(n)}$ and
$N=2^m$, and for the remainder of the proof we will always consider $m$ and
$N$ to be defined in this way, as functions of a given positive integer $n$.

For every positive integer $n$ and for every $k\in\{0,\ldots,2^{N-m}\}$,
define $\B^n_k$ to be the set of all black boxes $B\subseteq\Sigma^n$ for which
there exist precisely $k$ distinct choices of $x\in\Sigma^{N-m}$ such that
$xy0^{n-N}\in B$ for all $y\in\Sigma^m$.
More succinctly,
\begin{equation}
  \B^n_k = \Bigl\{
  B\subseteq\Sigma^n\,:\,
  \bigabs{ \bigl\{
    x\in\Sigma^{N-m}\,:\,x\Sigma^m 0^{n-N}
    \subseteq B\bigr\}}=k\Bigr\}.
\end{equation}
These sets define a partition
\begin{equation}
  \label{eq:partition}
  \B^n_0 \cup \B^n_1 \cup \cdots \cup \B^n_{2^{N-m}}
\end{equation}
of the set of all $n$-bit black boxes.

Next, define a language $L(A)$, for every oracle $A\subseteq\Sigma^{\ast}$, as
follows:
\begin{equation}
  L(A) = \bigl\{ 0^n\,:\, n\geq 1 \;\text{and}\;A\cap\Sigma^n \in \B^n_1\bigr\}.
\end{equation}
It will be proved, with respect to a random choice of $A$, that
$L(A)\in\UP^A$ with probability~1 and $L(A)\in\QSZK^A$ with probability~0.
The first step of the proof, which establishes that $L(A)\in\UP^A$ with
probability~1, is a special case of the results of Beigel \cite{Beige1989}.
We include a proof, both for completeness and because the concepts and notation
required for the proof are useful in the second step of the proof that concerns
$\QSZK$.
  
First, fix a positive integer $n$, consider a random choice of 
$B\subseteq\Sigma^n$ (where each string of length $n$ is independently included
in $B$ with probability 1/2), and define indicator random variables
\begin{equation}
  Z_x = \begin{cases}
    1 & \text{if $x\Sigma^m 0^{n-N}\subseteq B$}\\
    0 & \text{otherwise}.
  \end{cases}
\end{equation}
for every string $x\in\Sigma^{N-m}$.
Also define
\begin{equation}
  Z = \sum_{x\in\Sigma^{N-m}} Z_x,
\end{equation}
and observe that the value taken by the random variable $Z$ corresponds to the
index of the set in the partition \eqref{eq:partition} to which $B$ belongs.
It is the case that
\begin{equation}
  \op{E}[Z_x] = 2^{-2^m} = 2^{-N}
\end{equation}
for every $x\in\Sigma^{N-m}$, and moreover
\begin{equation}
  \op{Pr}(Z = 0) = \bigl( 1 - 2^{-N}\bigr)^{2^{N-m}} > 1 - \frac{1}{N},
\end{equation}
where the inequality follows from the fact that $2^N > 2^m = N$ together
with the observation that the function $k\mapsto (1 - 1/k)^k$ is strictly
increasing.
We also have
\begin{equation}
  \op{Pr}(Z=1) = 2^{N-m}\cdot 2^{-N}\cdot\bigl(1-2^{-N}\bigr)^{2^{N-m}-1}
  > \frac{1}{N} - \frac{1}{N^2},
\end{equation}
and therefore
\begin{equation}
  \op{Pr}(Z \geq 2) < \frac{1}{N^2} \leq \frac{4}{n^2}.
\end{equation}

Now, the series
\begin{equation}
  \sum_{n=1}^{\infty} \frac{4}{n^2}
\end{equation}
converges, so it follows from the Borel--Cantelli lemma that for a random
oracle $A\subseteq\Sigma^{\ast}$, with probability 1 there are at most finitely
many values of $n$ for which
\begin{equation}
  \label{eq:B-bad-on-n}
  A\cap \Sigma^n \not\in \B^n_0 \cup \B^n_1.
\end{equation}
It therefore holds with probability 1 that $L(A) \in \UP^A$, for if there are
finitely many values of $n$ for which \eqref{eq:B-bad-on-n} holds, then
membership in $L(A)$ can be decided through the nondeterministic decision
procedure described in \myref{Figure}{fig:random-L-in-UP}.

\begin{figure}[t]
  \noindent\hrulefill
  
  \begin{tabbing}
    x\=xxx\=xxx\=\kill
    \> On input $w = 0^n$:\\[2mm]
    \>\> If $n=0$ or $n$ is one of the finitely many values for which
    $A\cap \Sigma^n \not\in \B^n_0 \cup \B^n_1$,\\
    \>\> then \emph{reject}.\\[2mm]
    \>\> Let $m = \lfloor\log(n)\rfloor$ and $N = 2^m$.\\[2mm]
    \>\> Nondeterministically choose a string $x \in \Sigma^{N-m}$.\\[2mm]
    \>\> If $xy0^{n-N} \in A$ for all $y\in\Sigma^m$ then \emph{accept},
    else \emph{reject}.
  \end{tabbing}
  \vspace{-2mm}
  \noindent\hrulefill
  \caption{A polynomial-time nondeterministic decision procedure for $L(A)$
    with either 0 or 1 accepting computation, provided that there are finitely
    many values of $n$ for which
    $A\cap \Sigma^n \not\in \B^n_0 \cup \B^n_1$.
  }
  \label{fig:random-L-in-UP}
\end{figure}

\subsubsection*{Black box separation}

It remains to prove that $L(A)\in\QSZK^A$ with probability~0.
The first step toward proving this fact is to consider a simple way of
modifying queries made by quantum circuits.

Fix a positive integer $n$, along with an arbitrary subset
$C\subseteq\Sigma^n$, let $m = \lfloor\log(n)\rfloor$ and $N = 2^m$ as
before, and suppose that $B\subseteq\Sigma^{N-m}$ is a given black box.
Using a single query to $B$, it is possible to design a circuit (into which $C$
may be hard-coded) that exactly simulates a query to the set
\begin{equation}
  \label{eq:black-box-from-A}
  C \cup B\Sigma^m 0^{n-N}.
\end{equation}

Now assume that $Q$ is a quantum circuit that takes no inputs and makes at most
$T$ queries to an $n$-bit black box, and as above suppose that
$C\subseteq\Sigma^n$ is a fixed subset of strings of length $n$ and
$B\subseteq\Sigma^{N-m}$ is an $(N-m)$-bit black box.
By replacing each query gate of $Q$ with the circuit suggested above, one
obtains a new circuit $R$ that makes $T$ queries to $B$ and produces exactly
the same output as $Q$ when run on the black box \eqref{eq:black-box-from-A}.

Next, suppose that $Q_0$ and $Q_1$ are two quantum circuits that take no
input and make at most $T$ queries to an $n$-bit black box, and let
$\rho_0(D)$ and $\rho_1(D)$ denote the outputs of these circuits on a given
black box $D\subseteq\Sigma^n$.
If it is the case that $T \leq \sqrt{2^{N-m}}/20736$, then for an arbitrary
choice of $C\subseteq\Sigma^n$, there are at least $\frac{2}{3}2^{N-m}$
distinct choices of a string $x\in\Sigma^{N-m}$ such that
\begin{equation}
  \label{eq:bound-on-R-circuits}
  \begin{multlined}
    \hspace{-1cm}
    \biggl|
    \frac{1}{2}\bigtnorm{
      \rho_0(C \cup x \Sigma^m 0^{n-N}) -
      \rho_1(C \cup x \Sigma^m 0^{n-N})}\\
    - \frac{1}{2} \bigtnorm{\rho_0(C) - \rho_1(C)}\biggr|
    < \frac{1}{6}.
    \hspace{-1cm}
  \end{multlined}
\end{equation}
This follows from \myref{Lemma}{lemma:main}, together with the observation
described in the previous paragraph.
That is, for the circuits $R_0$ and $R_1$ resulting from $Q_0$ and $Q_1$
together with the given choice of $C$ by the process above, we obtain
output states $\sigma_0(B)$ and $\sigma_1(B)$ (on a given black box
$B\subseteq\Sigma^{N-m}$) satisfying
\begin{equation}
  \begin{aligned}
    \sigma_0(\varnothing) & = \rho_0(C),\\
    \sigma_1(\varnothing) & = \rho_1(C),\\
    \sigma_0(\{x\}) & = \rho_0(C \cup x \Sigma^m 0^{n-N}),\\
    \sigma_1(\{x\}) & = \rho_1(C \cup x \Sigma^m 0^{n-N}).
  \end{aligned}
\end{equation}

Moving closer to the language $L(A)$, we may say that a pair of quantum
circuits $(Q_0,Q_1)$ that makes $n$-bit queries to a black box
$B\subseteq\Sigma^n$ is \emph{incorrect} for $B$ if one of the following two
statements is satisfied:
\begin{enumerate}
\item[1.]
  $B\in\B_0$ and $\norm{\rho_0(B) - \rho_1(B)}_1 > 1/3$.
\item[2.]
  $B\in\B_1$ and $\norm{\rho_0(B) - \rho_1(B)}_1 < 2/3$.
\end{enumerate}
Otherwise, $(Q_0,Q_1)$ is \emph{correct} for $B$.

Now consider a random choice of a black box $B\subseteq\Sigma^n$, with each
string being included in $B$ independently with probability 1/2.
Suppose $(Q_0,Q_1)$ is correct for a $\delta$ fraction of black boxes
in $\B^n_0$.
For each $C\in\B^n_0$ for which $(Q_0,Q_1)$ is correct, there are
at least $\frac{2}{3}2^{N-m}$ choices of $x\in\Sigma^{N-m}$ such that
$(Q_0,Q_1)$ is incorrect for $C \cup x \Sigma^m 0^{n-N}$ by the analysis
above.
Each element of $B \in \B^n_1$ can be obtained as
$B = C \cup x \Sigma^m 0^{n-N}$ for $2^N - 1$ distinct sets
$C\in\B^n_0$, and therefore $(Q_0,Q_1)$ is incorrect for at least
\begin{equation}
  (1 - \delta) \abs{\B^n_0} +
  \frac{2\cdot\delta\cdot 2^{N-m}}{3 \cdot 2^N}\abs{\B^n_0}
  \geq \frac{2 \abs{\B^n_0}}{3 N}
\end{equation}
distinct choices of $B \in \B^n_0 \cup \B^n_1$.
For a random choice of $B\subseteq\Sigma^n$, it therefore holds that
$(Q_0,Q_1)$ is incorrect with probability at least
\begin{equation}
  \frac{2}{3N} - \frac{2}{3N^2} \geq
  \frac{2}{3n} - \frac{2}{3n^2} \geq \frac{1}{3n},
\end{equation}
provided $n\geq 2$.

\subsubsection*{Oracle Existence}

Suppose that
\begin{equation}
  \Q = \bigl\{ \bigl( Q_0^n,Q_1^n\bigr) \,:\,n\in\natural\bigr\}
\end{equation}
is a polynomial-time uniform family of pairs of relativized quantum circuits,
and consider the performance of these circuits on a random oracle
$A\subseteq\Sigma^{\ast}$.
That is, we will consider the probability that each pair $(Q^n_0,Q^n_1)$
correctly determines membership in $L(A)$, with respect to the characterization
of $\QSZK$ given by \myref{Theorem}{thm:QSD-QSZK-complete}.

A similar issue to the one discussed in the previous subsection now arises,
due to the possibility for the circuits $Q_0^n$ and $Q_1^n$ to make queries
to $A$ on strings of length different from $n$, and the same argument allows
for this issue to be circumvented.
That is, there must exist a family
\begin{equation}
  \R = \bigl\{ \bigl( R_0^n,R_1^n\bigr) \,:\,n\in\natural\bigr\}
\end{equation}
of quantum circuits, where $R_0^n$ and $R_1^n$ include a number of $n$-bit
query gates that is polynomial in $n$ and include no query gates for strings of
other lengths, such that the probability $\Q$ is incorrect is at least the
probability $\R$ is incorrect, for a random oracle $A$.

For all but finitely many choices of $n$, it must therefore hold that the
number $T$ of $n$-bit queries made by either $R_0^n$ or $R_1^n$ must satisfy
the bound $T \leq \sqrt{2^{N-m}}/20736$.
This implies that for all but finitely many choices of~$n$, the pair
$\bigl( R_0^n,R_1^n\bigr)$ incorrectly determines the membership
$0^n\in L(A)$ with probability at least $1/(3n)$.
The series
\begin{equation}
  \sum_{n=1}^{\infty} \frac{1}{3n}
\end{equation}
diverges, so by the second Borel--Cantelli lemma the collection $\R$ fails to
compute $L(A)$ for at least one input $0^n$ (and in fact infinitely many
such inputs) with probability~1 for a random choice of
$A\subseteq\Sigma^{\ast}$.
The family $\Q$ is therefore incorrect for a random oracle with probability~1.

Finally, because there are countably many polynomial-time uniform families of
pairs of relativized quantum circuits, and each family correctly decides
$L(A)$ with probability~0, we have that $L(A)\not\in\QSZK^A$ with
probability~1, as required.

\subsection*{Acknowledgments}

We thank Shalev Ben-David and Robin Kothari for discussions and for sharing
an early draft of their paper with us.
We also thank Rajat Mittal and Abel Molina for helpful conversations.
SM thanks the Department of Computer Science and Engineering at the
Indian Institute of Technology Kanpur, where part of this work was
done, for their hospitality.
This work was partially supported by the Natural Sciences and
Engineering Research Council of Canada, the University of Waterloo
Undergraduate Research Internship program, the President's Scholarship
program, and the Faculty of Mathematics Scholarship program.


\bibliographystyle{alpha}

\begin{thebibliography}{GHMW15}

\bibitem[AH87]{AiellH1987}
William Aiello and Johan H{\aa}stad.
\newblock Perfect zero-knowledge languages can be recognized in two rounds.
\newblock In {\em Proceedings of the 28th Annual {IEEE} Symposium on
  Foundations of Computer Science}, pages 439--448. {IEEE} Computer Society,
  1987.

\bibitem[AKN98]{AharoKN1998}
Dorit Aharonov, Alexei~Y. Kitaev, and Noam Nisan.
\newblock Quantum circuits with mixed states.
\newblock In {\em Proceedings of the 30th Annual {ACM} Symposium on the Theory
  of Computing}, pages 20--30. {ACM}, 1998.
\newblock \arxiv{quant-ph/9806029}.

\bibitem[Amb02]{Ambai2002}
Andris Ambainis.
\newblock Quantum lower bounds by quantum arguments.
\newblock {\em Journal of Computer and System Sciences}, 64(4):750--767, 2002.
\newblock \arxiv{quant-ph/0002066}.

\bibitem[AS04]{AaronS2004}
Scott Aaronson and Yaoyun Shi.
\newblock Quantum lower bounds for the collision and the element distinctness
  problems.
\newblock {\em Journal of the {ACM}}, 51(4):595--605, July 2004.

\bibitem[Bab85]{Babai1985}
L{\'{a}}szl{\'{o}} Babai.
\newblock Trading group theory for randomness.
\newblock In {\em Proceedings of the 17th Annual {ACM} Symposium on Theory of
  Computing}, pages 421--429. {ACM}, 1985.

\bibitem[BBBV97]{BenneBBV1997}
Charles~H. Bennett, Ethan Bernstein, Gilles Brassard, and Umesh~V. Vazirani.
\newblock Strengths and weaknesses of quantum computing.
\newblock {\em {SIAM} Journal on Computing}, 26(5):1510--1523, 1997.
\newblock \arxiv{quant-ph/9701001}.

\bibitem[BDK17]{Ben-daK2017}
Shalev Ben-David and Robin Kothari.
\newblock Quantum sabotage complexity, zero-error algorithms, and statistical
  zero knowledge.
\newblock Manuscript, 2017.

\bibitem[Bei89]{Beige1989}
Richard Beigel.
\newblock On the relativized power of additional accepting paths.
\newblock In {\em Proceedings of the 4th Annual Structure in Complexity Theory
  Conference}, pages 216--224. {IEEE} Computer Society, 1989.

\bibitem[BG81]{BenneG1981}
Charles~H. Bennett and John Gill.
\newblock Relative to a random oracle $\text{A}$, $\text{P}^\text{A} \neq
  \text{NP}^\text{A} \neq \text{co-NP}^\text{A}$ with probability $1$.
\newblock {\em {SIAM} Journal on Computing}, 10(1):96--113, 1981.

\bibitem[BHZ87]{BoppaHZ1987}
Ravi~B. Boppana, Johan H{\aa}stad, and Stathis Zachos.
\newblock {{Does co-NP Have Short Interactive Proofs?}}
\newblock {\em Information Processing Letters}, 25(2):127--132, 1987.

\bibitem[BM88]{BabaiM1988}
L{\'{a}}szl{\'{o}} Babai and Shlomo Moran.
\newblock {{Arthur-Merlin Games}}: {A} randomized proof system, and a hierarchy
  of complexity classes.
\newblock {\em Journal of Computer and System Sciences}, 36(2):254--276, 1988.

\bibitem[Che16]{Chen2016}
Lijie Chen.
\newblock {{A Note on Oracle Separations for BQP}}, 2016.
\newblock Manuscript. \arxiv{1605.00619}.

\bibitem[For89]{Fortn1989}
Lance Fortnow.
\newblock The complexity of perfect zero-knowledge.
\newblock {\em Advances in Computing Research}, 5:327--343, 1989.

\bibitem[FR99]{FortnR1999}
Lance Fortnow and John~D. Rogers.
\newblock Complexity limitations on quantum computation.
\newblock {\em Journal of Computer and System Sciences}, 59(2):240--252, 1999.
\newblock \arxiv{cs.CC/9811023}.

\bibitem[FvdG99]{FuchsG1999}
Christopher~A. Fuchs and Jeroen van~de Graaf.
\newblock Cryptographic distinguishability measures for quantum-mechanical
  states.
\newblock {\em {IEEE} Transactions on Information Theory}, 45(4):1216--1227,
  1999.
\newblock \arxiv{quant-ph/9712042}.

\bibitem[GHMW15]{GutosHMW2015}
Gus Gutoski, Patrick Hayden, Kevin Milner, and Mark~M. Wilde.
\newblock Quantum interactive proofs and the complexity of separability
  testing.
\newblock {\em Theory of Computing}, 11:59--103, 2015.
\newblock \arxiv{1308.5788}.

\bibitem[GMR85]{GoldwMR1985}
Shafi Goldwasser, Silvio Micali, and Charles Rackoff.
\newblock The knowledge complexity of interactive proof-systems (extended
  abstract).
\newblock In {\em Proceedings of the 17th Annual {ACM} Symposium on Theory of
  Computing}, pages 291--304. {ACM}, 1985.

\bibitem[GMR89]{GoldwMR1989}
Shafi Goldwasser, Silvio Micali, and Charles Rackoff.
\newblock The knowledge complexity of interactive proof systems.
\newblock {\em {SIAM} Journal on Computing}, 18(1):186--208, 1989.

\bibitem[HMW13]{HaydeMW2013}
Patrick Hayden, Kevin Milner, and Mark~M. Wilde.
\newblock Two-message quantum interactive proofs and the quantum separability
  problem.
\newblock In {\em Proceedings of the 28th Conference on Computational
  Complexity}, pages 156--167. {IEEE} Computer Society, 2013.
\newblock \arxiv{1211.6120}.

\bibitem[Kob03]{Kobay2003}
Hirotada Kobayashi.
\newblock Non-interactive quantum perfect and statistical zero-knowledge.
\newblock In {\em Proceedings of the 14th International Symposium Algorithms
  and Computation {ISAAC} 2003}, volume 2906 of {\em Lecture Notes in Computer
  Science}, pages 178--188. Springer, 2003.
\newblock \arxiv{quant-ph/0207158}.

\bibitem[LMR{\etalchar{+}}11]{LeeMRSS2011}
Troy Lee, Rajat Mittal, Ben~W. Reichardt, Robert Spalek, and Mario Szegedy.
\newblock Quantum query complexity of state conversion.
\newblock In {\em Proceedings of the 52nd Annual {IEEE} Symposium on
  Foundations of Computer Science}, pages 344--353. {IEEE} Computer Society,
  2011.
\newblock \arxiv{1011.3020}.

\bibitem[{\v S}S06]{SpaleS2006}
Robert {\v S}palek and Mario Szegedy.
\newblock All quantum adversary methods are equivalent.
\newblock {\em Theory of Computing}, 2(1):1--18, 2006.
\newblock \arxiv{quant-ph/0409116}.

\bibitem[Uhl76]{Uhlma1976}
A.~Uhlmann.
\newblock The ``transition probability'' in the state space of a
  {$\ast$}-algebra.
\newblock {\em Reports on Mathematical Physics}, 9:273--279, April 1976.

\bibitem[Val76]{Valia1976}
Leslie~G. Valiant.
\newblock Relative complexity of checking and evaluating.
\newblock {\em Information Processing Letters}, 5(1):20--23, 1976.

\bibitem[Wat02]{Watro2002}
John Watrous.
\newblock Limits on the power of quantum statistical zero-knowledge.
\newblock In {\em Proceedings of the 43rd Annual {IEEE} Symposium on
  Foundations of Computer Science}, page 459. {IEEE} Computer Society, 2002.
\newblock \arxiv{quant-ph/0202111}.

\bibitem[Wat09]{Watro2009}
John Watrous.
\newblock Zero-knowledge against quantum attacks.
\newblock {\em {SIAM} Journal on Computing}, 39(1):25--58, 2009.
\newblock \arxiv{quant-ph/0511020}.

\bibitem[Wat11]{Watro2011}
John Watrous.
\newblock Guest column: an introduction to quantum information and quantum
  circuits.
\newblock {\em {SIGACT} News}, 42(2):52--67, 2011.

\bibitem[Zha05]{Zhang2005}
Shengyu Zhang.
\newblock On the power of {Ambainis} lower bounds.
\newblock {\em Theoretical Computer Science}, 339(2):241 -- 256, 2005.
\newblock \arxiv{quant-ph/0311060}.

\end{thebibliography}

\newcommand{\etalchar}[1]{$^{#1}$}

\end{document}